\newtheorem{claim}{Claim}
\title{Social Bubbles and Superspreaders: Source Identification for Contagion Processes on Hypertrees}
\name{Sam Spencer and Lav R.\ Varshney\thanks{This work was supported in part by NSF grants OAC-1639529 and ECCS-2033900, and the Center for Pathogen Diagnostics through the ZJU-UIUC Dynamic Engineering Science Interdisciplinary Research Enterprise (DESIRE).}}
\address{Coordinated Science Laboratory and	Department of Electrical and Computer Engineering\\
	University of Illinois at Urbana-Champaign, Urbana, IL 61801, USA}
\begin{document}

\maketitle

\begin{abstract}
Previous work has shown that for contagion processes on extended star networks (trees with exactly one node of degree $>$ 2), there is a simple, closed-form expression for a highly accurate approximation to the maximum likelihood infection source. Here, we generalize that result to a class of hypertrees which, although somewhat structurally analogous, provides a much richer representation space. In particular, this approach can be used to estimate \emph{patient zero} sources, even when the infection has been propagated via large group gatherings rather than person-to-person spread, and when it is spreading through interrelated \emph{social bubbles} with varying degrees of overlap. In contact tracing contexts, this estimator may be used to identify the source of a local outbreak, which can then be used for forward tracing or for further backward tracing (by similar or other means) to an upstream source.
\end{abstract}

\begin{keywords}
Infection source identification, SI model, 
hypergraph, maximum likelihood, contagion, 
superspreader
\end{keywords}

\section{Introduction}
\label{sec:introduction}

Localizing sources of spreading processes in networks has become a well-studied inference problem in information theory and signal processing \cite{ShahZamanIT11,LuoTL2013,ZhouJV2019}.
As part of COVID-19 pandemic response, there has been a strong focus on (forward) contact tracing to determine who might have been exposed to the pathogen.  Due to the clustered nature of this disease spread, however, there has been recent interest in backward contact tracing to determine the source of a pathogen to then facilitate forward tracing \cite{Tufekci2020,EndoKMAFK2020,BradshawAHLE2020}.  This is especially effective since it is estimated that 80 percent of infections arise from only about 10 percent of cases \cite{Bi2020,Adam2020}---so-called \emph{superspreading events}.  

Distinct from the clustered nature of disease spread, there is greater clustering in social networks themselves as people form \emph{social bubbles} due to contact adaptation via social distancing \cite{SahnehVMS2019}.  Such bubbles may include families living together, co-workers in close quarters, friends in social gatherings, or children in daycare. Transmission within such settings may well occur far more easily and quickly than through less direct, sustained interactions.  In practice, people are interacting in partially overlapping bubbles, e.g.\ daycare students are also in family groups. As such, spreading models that capture group-based transmission may be more insightful than ones based on person-to-person transmission, so as to localize COVID-19 superspreading events.   

To model pathogen spread via groups, one may use bipartite graphs \cite{Frost2007} where people are one kind of node and associations another.  As an alternative, we consider spread over \emph{hypergraphs} that correspond to such bipartite graphs, where nodes are people and hyperedges are associations.  As summarized in a very recent review paper \cite[Sec.~7.1.2]{Battiston2020}, there is some study of spreading process dynamics in hypergraphs \cite{BodoKS2016,SuoGS2018,ArrudaPM2020,AntelmiCSS2020}, but no study on source localization. 

Since infection source localization in general networks is mathematically difficult, one must make some simplifying assumptions to define a more tractable problem.  In our case, we  extend the model of susceptible-infected (SI) spreading over extended star networks  \cite{SpeSri16} to an analogous class of extended star \emph{hypertrees}. For the situations described by our model, we define a computationally-efficient  approximation to the maximum likelihood estimate (MLE) of the hyperedge containing the source of the infection relative to the particular infected set in an observed snapshot of data.  We then prove the approximation is close to MLE (in the sense of \cite{SpeSri16}), and also give simulations demonstrating efficacy of the estimator.  

Beyond COVID-19 and similar pandemics, our approach also captures broadcasting information in a one-to-many manner, rather than person-to-person communication as modeled in traditional rumor spreading \cite{ShahZamanIT11}.

\section{Source Localization for Graphs}
\label{sec:background}
Let us review the standard SI compartment model of infection spreading on graphs, with edge-based propagation in continuous time. That is, at any particular point in time, nodes are either \emph{susceptible} or they are \emph{infected} and capable of infecting susceptible individuals. All nodes begin as susceptible, except for one (the \emph{source}), which becomes infected through some outside action. Subsequently, if a susceptible node shares an edge with an infected neighbor, the infection will traverse that edge and infect the susceptible node with a waiting time that is exponentially distributed with mean $T$. Once infected, a node remains that way indefinitely. 
Due to the  memorylessness property of the system, at any given time the next infection is equally likely to occur along any outgoing edge from the current infected set. 

Assume there is a single source. Source localization in a graph is then, given an observed infection pattern (the subgraph of nodes infected at some point in time), finding the MLE of the source node giving rise to that infection pattern.  

Let an \emph{extended star network} be a tree with exactly one node of degree $ >  2$: this center node is denoted $O$.  The $m$ neighbors of this node give rise to $m$ \emph{arms}---chains of successive nodes of degree $2$. The nodes of each arm are numbered starting with $1$ (for the node adjacent to $O$) and increasing from there.

If the contagion were confined to a single arm, source localization reduces to a problem on a line graph, and the MLE is well-known to be the midpoint of the infection (in fact, for a uniform prior, the likelihood function follows a binomial distribution  
\cite{ShahZamanIT11,SpeSri15}).  Therefore, we assume that any nontrivial infection pattern will comprise $O$, along with the closest $k_i$ nodes along each arm $i$.
For an extended star network, there is a simple, closed-form expression for a highly accurate approximation to the MLE  \cite{SpeSri16}. 
It is, along the longest arm:
\begin{align}
\label{eq:old_est}
\ell= \frac{k_1-\frac{\sum_{i=2}^m k_i}{m-1}}{2}\mbox{.}
\end{align}
Further analysis in \cite{KesSpe19} tightened the approximation bound for this estimator. Note that estimator \eqref{eq:old_est} is scale-invariant---all arm lengths $k_i$  can be scaled by a constant, and the resulting value of $\ell$ will be scaled by the same constant. 

\section{Spreading Process on Hypergraphs}
\label{sec:model}

Having an understanding of spreading and source localization on graphs, now we turn to hypergraphs.  

A \emph{hypergraph} is a generalization of a graph, consisting of a set of nodes and hyperedges. Unlike in a graph, where an edge can only connect two nodes, a hyperedge can connect two or more nodes. For present purposes, we further define a \emph{hypertree} (cf.~junction trees in statistical inference \cite{QiM2004}) as a hypergraph that inherits the conventional properties of trees---that is, it is connected, undirected, and acyclic.

For source localization, our hypertree consists of a set of nodes representing all individuals under study, and a set of hyperedges representing the various non-distanced groupings in which these nodes exist. We  denote these hyperedges as \[
\{E_0, E_{1,1}, 
\dots,E_{1,k_1}, E_{2,1},\dots, E_{2,k_2},\dots, E_{m,1}, \dots, E_{m,k_m}\}
\]
such that for all $i$, $E_0\cap E_{i,1} \neq \emptyset$; for all $i,j$, $E_{i,j}\cap E_{i,j+1} \neq \emptyset$; and 
all other intersections are empty.

Intuitively, $ E_0 $ represents a superspreading event, with its members being individuals in attendance. 
The remaining hyperedges are arranged in $ m $ \emph{arms} of lengths $ \{k_1,\dots, k_m\}$. The first hyperedge of each arm has an 
overlap with $ E_0 $, as do successive neighboring hyperedges within each arm.

To develop a hypergraph SI model, we consider each hyperedge to be in either a  susceptible or infected state, since we consider close and continuing contact among members of a hyperedge.  Once infected, a hyperedge remains that way indefinitely. We recognize this means a node can be a member of both a susceptible hyperedge and an infected hyperedge at the same time, but 
such an ambiguity is at worst only temporary.  If necessary, ambiguities may be resolved as desired (presumably in accordance with an appropriate physical model), as long as it is done in a consistent manner. 

Spreading is still modeled in continuous time.  When $ v $ members of a susceptible hyperedge are also members of a different, infected hyperedge, the entire set of nodes in the susceptible hyperedge becomes infected with a waiting time that is exponentially distributed with mean $ T/v $. To see why this makes sense, suppose we were to consider infections as taking place over simple edges (hyperedges of size 2), then we would model waiting time as exponentially distributed with mean $ T $.  Since there are $ v $ nodes in the overlap, we stipulate that transmission over any one of them is sufficient to convey infected status from one hyperedge to the other (since nodes within a single hyperedge are assumed to have close, continuing contact).  Accordingly, the appropriate transmission time is distributed as the lesser of the two node-based transmission times, which provides the desired result (exponentially distributed with mean $ T/v $).

\section{Source Localization for Hypergraphs}
\label{sec:generalization}
How does the source localization result for extended star graphs from Sec.~\ref{sec:background} extend to our hypergraph setting in Sec.~\ref{sec:model}? If we consider our hyperedges to be nodes in a graph, then we also have an infection spreading over an extended star network.  However, the difference is that the propagation times are now no longer uniformly exponentially distributed with mean $ T $, but instead vary with each hop.

We must determine the effect of shorter, variable transmission times on the growth of the infected set. Consider a case where two arms are growing simultaneously.  The first arm consists of edges with waiting time $ T $.  The second arm consists of edges with waiting time $ T/v $. Intuitively, it is clear that, over time, the second arm will grow at $ v $ times the rate of the first arm.  Alternatively, if we were to look at a snapshot of the second arm and find it to have length $ k $, the maximum likelihood estimate for the length of the first arm would be $ k/v $. Accordingly, we conclude that we should apply a weight corresponding to this rate to each hyperedge transmission in our model, so that the weighted length of an arm of hyperedges represents, in a typical sense, the equivalent unweighted length of an equivalent set of simple edges. Thus, since a simple edge hop is considered to have length 1, we will denote a hyperedge hop with $ v $ overlapping nodes to have equivalent length $ 1/v $.  We now show this more formally.

\begin{claim}
Under the spreading dynamics described above, over an equivalent period of time, the probability of observing $ vx $ hops on an arm of hyperedges with successive overlap $ v $ is equal to the probability of observing $ x $ hops on a singly-connected arm.
\end{claim}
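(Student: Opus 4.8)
The plan is to recognize each arm's growth as a counting process in continuous time and to compare the two processes through their rates. First I would fix a common time window $[0,t]$ and model the singly-connected arm: by the memorylessness described in Section~\ref{sec:model}, successive hops are sequential and independent with waiting times that are exponential of mean $T$, so the number of hops completed by time $t$, call it $N_1(t)$, is a Poisson process of rate $1/T$, i.e.\ $N_1(t)\sim\mathrm{Poisson}(t/T)$. For the arm of hyperedges with successive overlap $v$, each hop instead has waiting time exponential of mean $T/v$ (equivalently rate $v/T$) per the model of Section~\ref{sec:model}; the same sequential, memoryless structure makes the hop count $N_2(t)$ a Poisson process of rate $v/T$, so $N_2(t)\sim\mathrm{Poisson}(vt/T)$.

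The core of the argument is then a time-rescaling between the two processes. Since $N_2(t)\sim\mathrm{Poisson}(vt/T)$ and $N_1(vt)\sim\mathrm{Poisson}(vt/T)$, the hyperedge arm observed for time $t$ is distributed exactly as the singly-connected arm observed for time $vt$; equivalently, the rate-$v/T$ process is the superposition of $v$ independent rate-$1/T$ copies. I would then read off the quantity of interest in whichever of two equivalent forms is cleanest. In count form, $E[N_2(t)]=vt/T=v\,E[N_1(t)]$, so over the same period the hyperedge arm accrues $v$ hops for every one hop of the singly-connected arm, giving the correspondence ``$vx$ hyperedge hops $\leftrightarrow$ $x$ single hops.'' In waiting-time form, the time to complete $vx$ hyperedge hops is $\mathrm{Gamma}(vx,\,v/T)$ and the time to complete $x$ single hops is $\mathrm{Gamma}(x,\,1/T)$; both have mean \[ \frac{vx}{v/T}=xT=\frac{x}{1/T}, \] so the two events occur over the same (expected) period of time. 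This is precisely the weighting rule asserted just before the claim: assigning each overlap-$v$ hop an equivalent length $1/v$ makes the weighted length of the hyperedge arm match the length of the equivalent simple-edge arm.

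The step I expect to be the main obstacle is pinning down the exact sense of ``is equal,'' because the match is one of expected (typical) values rather than of the full distributions: the two Gamma laws above share the mean $xT$ but not their variances ($xT^2/v$ versus $xT^2$), so $P(N_2(t)=vx)\neq P(N_1(t)=x)$ as exact probability masses. I would therefore phrase the equality at the level the estimator actually uses---the expected/maximum-likelihood arm length, which is what enters \eqref{eq:old_est}---since that is exactly where the equivalence is exact. If a stronger statement is wanted, I would append a short concentration remark: the hyperedge count has coefficient of variation $\sqrt{T/(vt)}$, smaller than the single arm's $\sqrt{T/t}$ by a factor $\sqrt{v}$, so the typical-value correspondence $vx\leftrightarrow x$ only sharpens as the overlap $v$ grows.
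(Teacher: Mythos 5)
Your proposal is correct, and it reaches the paper's conclusion by a genuinely different route. The paper never introduces time or rates explicitly: it embeds the two arms in a joint race (a structure in which every next step has $v+1$ equiprobable choices, one on the singly-connected arm and $v$ on the hyperedge arm), conditions on the total number of steps $K$, and shows the multinomial probability $\frac{K!}{k_1!\cdots k_{v+1}!\,(v+1)^K}$ is maximized when all the $k_i$ are equal, so that in the maximum-likelihood growth pattern the hyperedge arm takes $v$ steps for every single-edge step. You instead work with the marginal counting processes: each arm is a Poisson process (rate $1/T$ versus $v/T$), and the equivalence is the exact rescaling identity that $N_2$ run for time $t$ has the law of $N_1$ run for time $vt$, read off in mean, mode, or waiting-time form. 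Your route buys an exact distributional statement plus a concentration bound that the paper does not have; the paper's route, by conditioning on $K$, avoids Poisson marginals entirely and compares the arms within a single realization of the race, which is closer to how arms actually compete on the hypertree. Your observation that the claim is false as a literal equality of probability masses is a genuine and worthwhile sharpening rather than a defect: for instance with $v=2$, $t=T$, $x=1$ one gets $P(N_2(t)=2)=2e^{-2}\neq e^{-1}=P(N_1(t)=1)$. The paper's own proof likewise establishes only the maximum-likelihood (modal) correspondence, so your re-reading of ``probability \ldots is equal'' as a statement about typical/ML growth is precisely the sense in which both arguments, and the downstream weighting rule ($1/v$ equivalent length per hop) used in the later claims, are valid.
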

\begin{proof}
Consider two related structures. Let the first be an extended star network with $ v+1 $ arms of singly-connected simple edges. Let the second be a hypertree with an origin node, one arm of singly-connected simple edges, and one arm of hyperedges with successive overlap $ v $. In each case, there are always $ v+1 $ possibilities for the next step in the propagation of the infection. Since the waiting times for each step are exponentially distributed, each of those distributions are memoryless, and thus each of the choices for the potential next steps are equiprobable. Let us label a step along the simple edge arm as choice 1, and the remaining possible steps, all along the hyperedge arm, as choices $2, \dots, v+1$. Now, suppose we have $ K $ total steps taken.  Then the probability for a given breakdown $ \{k_1, k_2, \dots, k_{v+1}\} $ of each of the choices can be written as
\begin{align}
P(k_1, k_2,\dots, k_{v+1}) = \frac{K!}{k_1 ! k_2 !\dots k_{v+1} !(v+1)^K }
\end{align}        
From this expression, it is clear that the maximum likelihood is attained when all of the $ k_i $ values are equal (let us call this value an unsubscripted $k$).  To see that this is the case, consider what happens if we replace one of the $ k $ values with $  k+1 $, and another with $ k-1 $. Then the numerator of the probability remains unchanged, but the denominator loses a factor of $ k $ while gaining a factor of $ k+1 $. This net increase in the denominator implies a decrease in the overall probability.  Since any perturbation from equal $k$ values can be decomposed as a sequence of similar steps (removing smaller factors from the denominator and replacing them with larger factors), the resulting probability would be strictly less.
Now, since each of the choices $\{2,\dots,v+1\}$ imply a step along the hyperedge arm, we conclude that in a maximum likelihood growth pattern, this arm would grow at $ v $ times the rate of a simply-connected arm.
\end{proof}
\begin{claim}
If a simple edge hop is considered to have length $1$, a hyperedge hop with $ v $ overlapping nodes can be considered to have equivalent length $  1/v $.
\end{claim}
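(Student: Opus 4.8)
The plan is to derive the second claim directly as a corollary of the first, so the key preliminary step is to fix a precise meaning for ``equivalent length.'' I would posit that equivalent length is the quantity assigned to a sequence of hops so that (i) it is \emph{additive} across successive hops, (ii) it is \emph{normalized} so that one simple-edge hop has length $1$, and (iii) equal equivalent lengths correspond to equal elapsed propagation time in the maximum-likelihood sense. Condition (iii) is precisely what the scale-invariant graph estimator \eqref{eq:old_est} implicitly relies on, since there the arm lengths serve as proxies for elapsed time.

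Then I would invoke the first claim, which states that over an equivalent period of time $vx$ hyperedge hops with successive overlap $v$ occur with the same probability as $x$ simple-edge hops; in particular, maximum-likelihood growth of the hyperedge arm reaches $vx$ hops exactly when the simple arm reaches $x$ hops. By condition (iii) the equivalent length of $vx$ hyperedge hops must then equal the equivalent length of $x$ simple-edge hops, which by condition (ii) is $x$. Writing $\lambda$ for the (common) equivalent length of a single hyperedge hop and applying additivity, I would solve
\begin{align}
vx \cdot \lambda = x \quad\Longrightarrow\quad \lambda = \frac{1}{v}\mbox{,}
\end{align}
which is the asserted value.

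I expect the main obstacle to be conceptual rather than computational: the entire content lies in justifying conditions (i)--(iii). Additivity is natural because successive hops are independent memoryless stages, and linearity across a fixed arm follows because every hop on that arm shares the identical overlap $v$ and hence identical per-hop statistics. The genuinely delicate point is condition (iii), the identification of equal length with equal typical elapsed time, because the first claim concerns maximum-likelihood hop \emph{counts} rather than exact expected times. I would therefore phrase the claim as defining the \emph{effective} (maximum-likelihood-equivalent) length, making explicit that its purpose is to let the subsequent estimator reweight each arm by $1/v$ per hop before applying the graph formula \eqref{eq:old_est}.
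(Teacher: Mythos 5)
Your proposal is correct and follows essentially the same route as the paper: both derive the result as a corollary of the first claim, using additivity of hop lengths and the unit normalization of a simple hop to equate the total length of $vx$ hyperedge hops with that of $x$ simple hops, then solve for the per-hop value $1/v$ (the paper simply instantiates this at $x=v$, comparing $v$ simple hops against $v^2$ hyperedge hops). Your explicit axiomatization of ``equivalent length'' via conditions (i)--(iii) is a more careful statement of what the paper leaves implicit, but it is the same argument in substance.
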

\begin{proof}
Denote the correct equivalent length of the hyperedge hop by $ z $. Now, if we scale up the problem by a factor of $ v $, then we have $ v $ single hops and $ v^2 $ hops along the hyperedge.  By the assumption, the total length of the $ v $ single hops is $ v $, and since they are equivalent, the $ v^2 $ hyperedge hops must total to the same.  Therefore, each hyperedge hop can be considered to have equivalent length $ 1/v $.  
\end{proof}

Therefore, let $v_{i,1}=|E_{i,j} \cap E_{i,j-1}|$ and  $v_{i,j} =|E_{i,j} \cap E_{i,j-1} |$ for $ j>1 $. Then let $ w_i= \sum_{j=1}^{k_i} 1/v_{i,j} $ . Now, by weighting each transmission and arm length accordingly, and applying the formula from the edge-based model, we can state our resulting maximum likelihood hyperedges are given by the following.  That is to say, this is the proposed estimator.
\begin{claim}
Under the spreading dynamics given above, the maximum likelihood source for the contagion process is well-approximated (in the same sense as in \cite{SpeSri16}) by
\begin{align}
\label{eq:est}
E_{1,l} : \sum_{j=1}^l \frac{1}{v_{1,j}} &= \frac{w_1-\frac{\sum_{i=2}^m w_i}{m-1}}{2} \mbox{.}
\end{align}
\end{claim}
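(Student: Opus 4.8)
The plan is to prove the estimator by reduction to the extended-star-graph result \eqref{eq:old_est}, using the equivalent-length correspondence already established in the two preceding claims. The guiding idea is that Claims 1 and 2 let us replace the hypertree by an extended star \emph{graph} in which each hyperedge arm of $k_i$ hops is collapsed into a single simple-edge arm of real-valued length $w_i = \sum_{j=1}^{k_i} 1/v_{i,j}$, with the superspreading hyperedge $E_0$ playing the role of the center node $O$. Labeling the arms (without loss of generality) so that arm $1$ has the largest weighted length, \eqref{eq:est} is then nothing more than \eqref{eq:old_est} read in weighted coordinates: the arm lengths $k_i$ become $w_i$, and the scalar source position $\ell$ along the longest arm becomes the cumulative weighted distance $\sum_{j=1}^l 1/v_{1,j}$ to the hyperedge $E_{1,l}$.

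First I would make the reduction precise hop by hop. Claim 2 assigns each individual hyperedge hop with overlap $v_{i,j}$ an equivalent length $1/v_{i,j}$; because the waiting times are exponential and hence memoryless, successive hops along an arm are independent races, so the equivalent length of a run of hops is additive and equals $w_i$. This upgrades the single-overlap statement of Claim 1 to arms of \emph{varying} overlap, which is exactly what the definition of $w_i$ requires. I would then argue that, at the level of the likelihood over candidate sources, the hypertree process and its weighted graph surrogate are interchangeable: at every step the next infection is an equiprobable choice among the available next hops (Claim 1), so the multinomial growth law that drives the extended-star analysis carries over verbatim once hop counts are measured in weighted units.

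Having identified the two problems as equivalent, I would invoke \eqref{eq:old_est} directly. Because that estimator is a closed-form expression that is scale-invariant and already meaningful for non-integer arm lengths, it applies unchanged to the real-valued weights $w_i$; substituting $k_i \mapsto w_i$ and solving for the weighted source position yields \eqref{eq:est}, and the approximation guarantee is inherited from the graph result through the correspondence. The main obstacle, I expect, is precisely this last transfer of the \emph{approximation bound}: Claims 1 and 2 pin down the most-likely growth pattern (the mode), whereas the bound in \cite{SpeSri16} is a statement about the whole likelihood surface over source locations. I would need to check that the weighting preserves not merely the mode but the relative likelihoods of nearby sources to the same order, and to handle the discreteness gap --- since \eqref{eq:est} selects the hyperedge $E_{1,l}$ whose cumulative weight lies nearest the right-hand side, the rounding from a continuous target to an actual hyperedge index must be controlled within the stated tolerance.
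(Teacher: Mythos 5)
Your proposal follows essentially the same route as the paper: the paper's entire proof is the single instruction to substitute the weighted arm lengths $w_i$ into the maximum likelihood expression \eqref{eq:old_est} from \cite{SpeSri16}, which is exactly the reduction you carry out via Claims 1 and 2. In fact your write-up is more careful than the paper's, since you explicitly flag the two issues the paper leaves implicit---transferring the approximation bound (not just the mode) through the weighting, and controlling the rounding from a continuous weighted position to a discrete hyperedge index.
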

\begin{proof}
Substitute the weighted edge lengths into the maximum likelihood expression from \cite{SpeSri16}.
\end{proof}

\section{Sensitivity to Noisy Data}
\label{sec:sensitivity}
In most real-world settings, data is subject to errors.  We consider a few classes of errors, and evaluate their impact on the source estimate. Since results will depend on the weights of the links between successive nodes in the vicinity of the optimal ML estimate, they are given as order results. We  assume the overlaps $ w_i $ are distributed such that $ E[(1/w_i )] $ is finite.\vspace{1mm}

\noindent {\bf Missing Arm (Non-longest)}
If we miss an entire arm by not observing one of the outgrowths of the superspreader event (e.g.\ at a protest, without attendance information), the impact is likely to be small, as long as the equivalent length of that arm is typical of the other non-longest arms.  Specifically, in the summation in the numerator of \eqref{eq:est}, we will lose one of the $w_i $ values in the calculation of the average length of the non-longest arms. This error would be zero mean, standard deviation $ \Theta(\sigma/(m-1)) $, where $ \sigma $ represents the standard deviation of the weighted lengths of the non-longest arms.\vspace{1mm}

\noindent {\bf Missing Step (Non-longest Arm)}
If a single step is lost from a non-longest arm, the corresponding $ w_i $ will be reduced by the weighted length of that hop, which will then reduce the computed average weighted length of the non-longest arms. This will have an $ \Theta(1/m) $ impact in an outward direction along the longest arm.\vspace{1mm}  

\noindent {\bf Missing Arm (Longest)}
This error is the most serious.  Since we do no have the arm on which the MLE lies, the ML source node will definitely be lost entirely, and a source on the next longest arm (or at the center node $ O $) will be chosen instead.\vspace{1mm}

\noindent {\bf Missing Step (Longest Arm)}
This error is more significant than if it were on a different arm.  Since the reduction is now in $ w_1 $, it is no longer divided by $ m-1 $, and due to the sign change, it would have an impact of $ \Theta(1) $ in an inward direction along the longest arm, rather than $ \Theta(1/m) $  outward. 
This also assumes that this missing step does not change which arm appears to be the longest.  If it did, the source estimate would now be on a different arm entirely, a 
nonlinear estimation-theoretic 
threshold phenomenon \cite{VanTrees1968}.

\section{Numerical Experiments}
\label{sec:testing}
We examine increasingly more realistic but less general random infection patterns, and compare results from our estimator against those from a detailed time-domain simulation. All random choices are uniform. The random patterns chosen  include:\vspace{-2mm}
\begin{enumerate}
\item \emph{Unconstrained}:  There are a random number of 2--6 arms, each of random length between 1--50, and each hyperedge overlap size is random  between 1--6.\vspace{-2mm}

\item \emph{Constrained}: The number of arms is random (2--6) as before, but now all arms but one have random length (11--30), while a randomly chosen arm has random length (31--50). The overlap sizes are randomly selected (1--6). 
There is a distinct longest arm, but since we are interested in the weighted length rather than the number of overlaps per se, this may not carry over.\vspace{-2mm} 

\item \emph{Typical}: This 
setting reflects spread patterns that are \emph{typical} in the information-theoretic sense. The structure begins similarly to the constrained case: 2--6 arms, one of length 31--50, the rest of length 21--30, overlap size 1--6.  Now, we compute the weighted lengths of each arm. Arms with the smallest and largest weighted lengths remain unchanged.  For each remaining arm, we begin at the center and work our way outwards, computing a running total of the weighted length as we go. Once we surpass the smallest weighted length, the remaining part of the arm is truncated.  This process ensures the non-longest arms all have similar weighted lengths, as expected for a typical growth pattern.
\end{enumerate}
We also consider all three models, but with unit overlap (called \emph{single} rather than \emph{multiple}). This mimics simple graphs to an extent. The difference is that in simple graphs, the infection originates at a single node and spreads in all directions from there. 
In our hypertree model, the infection begins within a hyperedge, and the initial spread is through overlaps with adjacent hyperedges.  This means  the effective length of the longest arm is arguably one hop longer than in the simple graph case. 
We examine the effect of adding a small offset to the $w_1$ term in \eqref{eq:est} to counter this difference.

In each case, we run 1000 trials using the chosen parameters and model, and compute when our estimator selects a different arm than the time-domain simulation (including when one or the other choose the central hub, called ``Arm 0''), as well as when the two methods give different node results (given that arm selections match).  In addition to these two error rates, we measure the average hops of node errors, and the positivity rate for the errors (how often the index of the estimated node is larger than the index from the simulation) to identify systematic bias.  Results without offsets are in Table~\ref{tab:results}, whereas those with small offsets are in Tables~\ref{tab:results2}--\ref{tab:results5}.

\begin{table}
    \centering
    \begin{tabular}{|l|r|r|r|r|}
    \hline
         & \footnotesize{arm error} &   \footnotesize{node error} & \footnotesize{error size} & \footnotesize{positivity} \\ \hline
        \footnotesize{unconstr.~(sing.)} & $0.60\%$ & $9.40\%$ & $1.00$ & $88.20\%$ \\ \hline
        \footnotesize{unconstr.~(mult.)} & $2.10\%$ & $33.20\%$ & $1.22$ & $12.60\%$ \\ \hline
        \footnotesize{constr.~(sing.)} & $0.00\%$ & $4.80\%$ & $1.00$ & $97.90\%$ \\ \hline
        \footnotesize{constr.~(mult.)} & $0.80\%$ & $20.70\%$ & $1.01$ & $85.90\%$ \\ \hline
        \footnotesize{typical (sing.)} & $0.00\%$ & $0.40\%$ & $1.00$ & $0.00\%$ \\ \hline
        \footnotesize{typical (mult.)} & $0.00\%$ & $25.90\%$ & $1.00$ & $97.70\%$ \\ \hline
    \end{tabular}
    \caption{Estimator Performance (no offset)}
    \label{tab:results}
\end{table}

\begin{table}
    \centering
    \begin{tabular}{|l|r|r|r|r|}
    \hline
         & \footnotesize{arm error} &   \footnotesize{node error} & \footnotesize{error size} & \footnotesize{positivity} \\ \hline
        \footnotesize{unconstr.~(sing.)} & $0.10\%$ & $10.00\%$ & $1.00$ & $82.00\%$ \\ \hline
        \footnotesize{unconstr.~(mult.)} & $2.80\%$ & $37.60\%$ & $1.24$ & $4.70\%$ \\ \hline
        \footnotesize{constr.~(sing.)} & $0.30\%$ & $5.40\%$ & $1.00$ & $96.30\%$ \\ \hline
        \footnotesize{constr.~(mult.)} & $0.90\%$ & $17.30\%$ & $1.01$ & $42.10\%$ \\ \hline
        \footnotesize{typical (sing.)} & $0.00\%$ & $0.10\%$ & $1.00$ & $0.00\%$ \\ \hline
        \footnotesize{typical (mult.)} & $0.00\%$ & $11.40\%$ & $1.00$ & $70.20\%$ \\ \hline
    \end{tabular}
    \caption{Estimator Performance ($0.125$ offset)}
    \label{tab:results2}
\end{table}

\begin{table}
    \centering
    \begin{tabular}{|l|r|r|r|r|}
    \hline
         & \footnotesize{arm error} &   \footnotesize{node error} & \footnotesize{error size} & \footnotesize{positivity} \\ \hline
        \footnotesize{unconstr.~(sing.)} & $0.30\%$ & $9.30\%$ & $1.00$ & $82.80\%$ \\ \hline
        \footnotesize{unconstr.~(mult.)} & $2.70\%$ & $42.90\%$ & $1.29$ & $4.60\%$ \\ \hline
        \footnotesize{constr.~(sing.)} & $0.00\%$ & $5.70\%$ & $1.00$ & $93.00\%$ \\ \hline
        \footnotesize{constr.~(mult.)} & $0.80\%$ & $18.30\%$ & $1.00$ & $31.50\%$ \\ \hline
        \footnotesize{typical (sing.)} & $0.00\%$ & $0.60\%$ & $1.00$ & $0.00\%$ \\ \hline
        \footnotesize{typical (mult.)} & $0.00\%$ & $12.50\%$ & $1.00$ & $55.20\%$ \\ \hline
    \end{tabular}
    \caption{Estimator Performance ($0.16$ offset)}
    \label{tab:results3}
\end{table}
\begin{table}
    \centering
    \begin{tabular}{|l|r|r|r|r|}
    \hline
         & \footnotesize{arm error} &   \footnotesize{node error} & \footnotesize{error size} & \footnotesize{positivity} \\ \hline
        \footnotesize{unconstr.~(sing.)} & $0.40\%$ & $5.20\%$ & $1.00$ & $59.60\%$ \\ \hline
        \footnotesize{unconstr.~(mult.)} & $4.00\%$ & $47.00\%$ & $1.28$ & $0.70\%$ \\ \hline
        \footnotesize{constr.~(sing.)} & $0.00\%$ & $2.60\%$ & $1.00$ & $69.20\%$ \\ \hline
        \footnotesize{constr.~(mult.)} & $1.10\%$ & $22.50\%$ & $1.02$ & $10.40\%$ \\ \hline
        \footnotesize{typical (sing.)} & $0.00\%$ & $0.40\%$ & $1.00$ & $0.00\%$ \\ \hline
        \footnotesize{typical (mult.)} & $0.00\%$ & $13.30\%$ & $1.00$ & $15.80\%$ \\ \hline
    \end{tabular}
    \caption{Estimator Performance ($0.25$ offset)}
    \label{tab:results4}
\end{table}

\begin{table}
    \centering
    \begin{tabular}{|l|r|r|r|r|}
    \hline
         & \footnotesize{arm error} &   \footnotesize{node error} & \footnotesize{error size} & \footnotesize{positivity} \\ \hline
        \footnotesize{unconstr.~(sing.)} & $0.50\%$ & $11.10\%$ & $1.00$ & $3.60\%$ \\ \hline
        \footnotesize{unconstr.~(mult.)} & $2.70\%$ & $67.50\%$ & $1.32$ & $0.20\%$ \\ \hline
        \footnotesize{constr.~(sing.)} & $0.30\%$ & $11.70\%$ & $1.00$ & $0.00\%$ \\ \hline
        \footnotesize{constr.~(mult.)} & $0.90\%$ & $44.60\%$ & $1.05$ & $0.20\%$ \\ \hline
        \footnotesize{typical (sing.)} & $0.00\%$ & $0.10\%$ & $1.00$ & $0.00\%$ \\ \hline
        \footnotesize{typical (mult.)} & $0.00\%$ & $38.20\%$ & $1.03$ & $0.00\%$ \\ \hline
    \end{tabular}
    \caption{Estimator Performance ($0.5$ offset)}
    \label{tab:results5}
\end{table}

In some cases, the listed ``errors" are not truly errors---merely ties broken differently---but we report them anyway. For the constrained and unconstrained single cases, all observed arm errors are either (a) when there are two arms of length $ n $ and $ n+1 $ respectively, so the source is equally likely be the central hub or one spot out on the longer arm, or (b) when two arms are tied for longest. In the typical single case, all observed node errors occur are when there are two arms, and the total number of overlaps is odd, so it is equally likely for the source to have been on either side of the center overlap. Similar non-errors may arise in other cases, but they are not the only (or even dominant) contributors to those error rates.

For multiple overlaps, arm errors tend to occur when the two largest  weighted lengths are very close in value, and due to granularity the observed maximum on the shorter arm is actually higher than on the longer arm.  The frequency of such errors is relatively independent of the offset value.  Further, arm length constraints reduce the incidence of such cases (as it becomes less likely the top two weighted arm lengths are close together), and they disappear entirely under typicality.

Node errors are more frequent than in the single case, as larger overlaps correspond to smaller increments in the weighed length.  Errors are most common when there is a wide discrepancy in the weighted lengths of the non-longest arms---a circumstance less likely under constraints, and even less so under typicality.  Unsurprisingly, lower error rates are observed when using offsets that yield positivity rates close to 50\%, i.e., biased outcomes tend to be less accurate.
Among the offsets examined, $0.25$ provides the best performance for single overlap settings, whereas $0.125$ seems to work best overall for multiple overlap settings.  This makes sense, since we are essentially trying to account for a difference of a single hop, and multiple overlaps correspond to shorter weighted lengths per hop on average than single hops.

\section{Conclusion}
\label{sec:conclusion}
By generalizing the extended star network to hypertrees, we leverage a known result for maximum likelihood source estimation to a much wider set of cases. In particular, we now  consider superspreading events, as well as nontrivial interactions between overlapping social bubbles.  Both of these social/epidemiological model enhancements are 
significant in understanding the propagation of contagion processes such as COVID-19. 
We have shown our approach is a solid method to address such challenges, preserves the optimality of the tree-based method, and has some desirable robustness properties as well.  As extensions, we aim to study settings with multiple sources \cite{JiTV2017} and with general hypergraphs \cite{JiTT2019}.

\bibliographystyle{IEEEbib}
\bibliography{refs}

\begin{thebibliography}{10}

\bibitem{ShahZamanIT11}
Devavrat Shah and Tauhid Zaman,
\newblock ``Rumors in a network: Who's the culprit?,''
\newblock {\em IEEE Transactions on Information Theory}, vol. 57, no. 8, pp.
  5163--5181, Aug. 2011.

\bibitem{LuoTL2013}
Wuqiong Luo, Wee~Peng Tay, and Mei Leng,
\newblock ``Identifying infection sources and regions in large networks,''
\newblock {\em IEEE Transactions on Signal Processing}, vol. 61, no. 11, pp.
  2850--2865, June 2013.

\bibitem{ZhouJV2019}
Huozhi Zhou, Ashish Jagmohan, and Lav~R. Varshney,
\newblock ``Generalized {J}ordan center: A source localization heuristic for
  noisy and incomplete observations,''
\newblock in {\em Proceedings of the 2019 IEEE Data Science Workshop (DSW)},
  June 2019, pp. 243--247.

\bibitem{Tufekci2020}
Zeynep Tufekci,
\newblock ``This overlooked variable is the key to the pandemic,''
\newblock {\em The Atlantic Monthly}, Sept. 2020.

\bibitem{EndoKMAFK2020}
Akira Endo, {Centre for the Mathematical Modelling of Infectious Diseases
  (CMMID) COVID-19 Working Group}, Quentin~J. Leclerc, Gwenan~M. Knight,
  Graham~F. Medley, Katherine~E. Atkins, Sebastian Funk, and Adam~J. Kucharski,
\newblock ``Implication of backward contact tracing in the presence of
  overdispersed transmission in {COVID-19} outbreak,'' medRxiv
  10.1101/2020.08.01.20166595, Aug. 2020.

\bibitem{BradshawAHLE2020}
William~J. Bradshaw, Ethan~C. Alley, Jonathan~H. Huggins, Alun~L. Lloyd, and
  Kevin~M. Esvelt,
\newblock ``Bidirectional contact tracing dramatically improves {COVID-19}
  control,'' medRxiv 10.1101/2020.05.06.20093369, July 2020.

\bibitem{Bi2020}
Qifang Bi, Yongsheng Wu, Shujiang Mei, Chenfei Ye, Xuan Zou, Zhen Zhang,
  Xiaojian Liu, Lan Wei, Shaun~A. Truelove, Tong Zhang, Wei Gao, Cong Cheng,
  Xiujuan Tang, Xiaoliang Wu, Yu~Wu, Binbin Sun, Suli Huang, Yu~Sun, Juncen
  Zhang, Ting Ma, Justin Lessler, and Tiejian Feng,
\newblock ``Epidemiology and transmission of {COVID-19} in 391 cases and 1286
  of their close contacts in {S}henzhen, {C}hina: a retrospective cohort
  study,''
\newblock {\em Lancet Infectious Diseases}, vol. 20, no. 8, pp. 911--919, Aug.
  2020.

\bibitem{Adam2020}
Dillon~C. Adam, Peng Wu, Jessica~Y. Wong, Eric H.~Y. Lau, Tim~K. Tsang, Simon
  Cauchemez, Gabriel~M. Leung, and Benjamin~J. Cowling,
\newblock ``Clustering and superspreading potential of {SARS-CoV-2} infections
  in {H}ong {K}ong,''
\newblock {\em Nature Medicine}, Sept. 2020.

\bibitem{SahnehVMS2019}
Faryad~Darabi Sahneh, Aram Vajdi, Joshua Melander, and Caterina~M. Scoglio,
\newblock ``Contact adaption during epidemics: A multilayer network formulation
  approach,''
\newblock {\em IEEE Transactions on Network Science and Engineering}, vol. 6,
  no. 1, pp. 16--30, Jan.-Mar. 2019.

\bibitem{Frost2007}
Simon D.~W. Frost,
\newblock ``Using sexual affiliation networks to describe the sexual structure
  of a population,''
\newblock {\em Sexually Transmitted Infections}, vol. 83, no. 1, pp. i37--i42,
  Aug. 2007.

\bibitem{Battiston2020}
Federico Battiston, Giulia Cencetti, Iacopo Iacopini, Vito Latora, Maxime
  Lucas, Alice Patania, Jean-Gabriel Young, and Giovanni Petri,
\newblock ``Networks beyond pairwise interactions: Structure and dynamics,''
\newblock {\em Physics Reports}, vol. 874, pp. 1--92, Aug. 2020.

\bibitem{BodoKS2016}
{\'{A}}gnes Bod{\'{o}}, Gyula~Y. Katona, and P{\'{e}}ter~L. Simon,
\newblock ``{SIS} epidemic propagation on hypergraphs,''
\newblock {\em Bulletin of Mathematical Biology}, vol. 78, no. 4, pp. 713--735,
  Apr. 2016.

\bibitem{SuoGS2018}
Qi~Suo, Jin-Li Guo, and Ai-Zhong Shen,
\newblock ``Information spreading dynamics in hypernetworks,''
\newblock {\em Physica A: Statistical Mechanics and its Applications}, vol.
  495, pp. 475--487, Apr. 2018.

\bibitem{ArrudaPM2020}
Guilherme~Ferraz de~Arruda, Giovanni Petri, and Yamir Moreno,
\newblock ``Social contagion models on hypergraphs,''
\newblock {\em Physical Review Research}, vol. 2, no. 2, pp. 023032, Apr.-Jun.
  2020.

\bibitem{AntelmiCSS2020}
Alessia Antelmi, Gennaro Cordasco, Carmine Spagnuolo, and Vittorio Scarano,
\newblock ``A design-methodology for epidemic dynamics via time-varying
  hypergraphs,''
\newblock in {\em Proceedings of the 19th International Conference on
  Autonomous Agents and Multiagent Systems (AAMAS 2020)}, May 2020, pp. 61--69.

\bibitem{SpeSri16}
Sam Spencer and R.~Srikant,
\newblock ``Maximum likelihood rumor source detection in a star network,''
\newblock in {\em Proceedings of the 2016 IEEE International Conference on
  Acoustics, Speech, and Signal Processing (ICASSP)}, Mar. 2016, pp.
  2199--2203.

\bibitem{SpeSri15}
Sam Spencer and R.~Srikant,
\newblock ``On the impossibility of localizing multiple rumor sources in a line
  graph,''
\newblock {\em ACM SIGMETRICS Performance Evaluation Review}, vol. 43, no. 2,
  pp. 66--68, Sept. 2015.

\bibitem{KesSpe19}
Himaja Kesavareddigari, Sam Spencer, Atilla Eryilmaz, and R.~Srikant,
\newblock ``Identification and asymptotic localization of rumor sources using
  the method of types,''
\newblock {\em IEEE Transactions on Network Science and Engineering}, vol. 7,
  no. 3, pp. 1145--1157, Jul.-Sep. 2020.

\bibitem{QiM2004}
Yuan Qi and Tom Minka,
\newblock ``Tree-structured approximations by expectation propagation,''
\newblock in {\em Advances in Neural Information Processing Systems 16},
  S.~Thrun, L.~K. Saul, and B.~Sch{\"{o}}lkopf, Eds., pp. 193--200. MIT Press,
  2004.

\bibitem{VanTrees1968}
Harry~L. Van~Trees,
\newblock {\em Detection, Estimation, and Modulation Theory},
\newblock John Wiley \& Sons, 1968.

\bibitem{JiTV2017}
Feng Ji, Wee~Peng Tay, and Lav~R. Varshney,
\newblock ``An algorithmic framework for estimating rumor sources with
  different start times,''
\newblock {\em IEEE Transactions on Signal Processing}, vol. 65, no. 10, pp.
  2517--2530, May 2017.

\bibitem{JiTT2019}
Feng Ji, Wenchang Tang, and Wee~Peng Tay,
\newblock ``On the properties of {G}romov matrices and their applications in
  network inference,''
\newblock {\em IEEE Transactions on Signal Processing}, vol. 67, no. 10, pp.
  2624--2638, May 2019.

\end{thebibliography}

\end{document}